\def\BibTeX{{\rm B\kern-.05em{\sc i\kern-.025em b}\kern-.08em
    T\kern-.1667em\lower.7ex\hbox{E}\kern-.125emX}}
\newtheorem{assumption}{Assumption}
\newtheorem{theorem}{Theorem}
\newtheorem{definition}{Definition}
\newtheorem{remark}{Remark}
\newtheorem{lemma}{Lemma}
\newtheorem{proposition}{Proposition}
\newtheorem{problem}{Problem}
\DeclareMathOperator{\DIAG}{diag}
\tikzset{%
  every neuron/.style={
    circle,
    draw,
    minimum size=.7cm
  },
  neuron missing/.style={
    draw=none, 
    scale=3,
    text height=0.333cm,
    execute at begin node=\color{black}$\vdots$
  },
}
\definecolor{red}{RGB}{187,0,0}
\definecolor{blue}{RGB}{0, 0,180}
\definecolor{pink}{RGB}{203, 76, 178}
\title{\LARGE \bf
Receding Horizon Control  in Deep Structured Teams: A Provably Tractable Large-Scale Approach with Application to Swarm Robotics
}
\author{Jalal Arabneydi and Amir G. Aghdam% <-this % stops a space
\thanks{
This work was supported in part by funding from the Innovation for  
Defence Excellence and Security (IDEaS) program from the Department of  
National Defence (DND). Any opinions and conclusions in this work are  those of the authors and do not reflect the views,  
positions, or policies of - and are not endorsed by - IDEaS, DND, or  
the Government of Canada.
}  
\thanks{Jalal Arabneydi and Amir G. Aghdam are with the  Department of Electrical and Computer Engineering, 
        Concordia University, 1455 de Maisonneuve Blvd. West, Montreal, QC, Canada, Postal Code: H3G 1M8.  Email: {\tt\small jalal.arabneydi@mail.mcgill.ca},        
        {\tt\small aghdam@ece.concordia.ca}}%
}
\begin{document}
\maketitle
\thispagestyle{empty}
\pagestyle{empty}

\vspace*{-5cm}{\footnotesize{Proceedings of IEEE  Conference on Decision and Control, 2021.}}
\vspace*{4.2cm}

\begin{abstract}
In this paper, a deep structured tracking problem is introduced for a large number of decision-makers. The problem is  formulated as a linear quadratic deep structured team, where  the decision-makers wish to  track a global target cooperatively while considering their local targets. For the unconstrained setup, the gauge transformation technique  is used to  decompose the resultant optimization problem in order to obtain a low-dimensional  optimal control strategy in terms of the local and global Riccati equations. For the constrained case, however, the feasible set is not necessarily decomposable  by the gauge transformation. To overcome this hurdle,  we propose a family of
local and global receding horizon control problems,  where a carefully constructed linear combination of their solutions provides a feasible solution for the original constrained problem. The salient property of the above solutions is that they  are tractable with respect to the number of decision-makers and can be implemented in a distributed  manner. In addition, the main results are generalized  to cases with  multiple sub-populations and multiple features, including  leader-follower setup,  cohesive cost function and soft structural constraint. Furthermore, a class of  cyber-physical attacks is proposed in terms of perturbed influence factors.  A numerical example  is presented to demonstrate the efficacy of the results.
\end{abstract}

%\begin{IEEEkeywords}
%
%\end{IEEEkeywords}
\section{Introduction}
%Swarm tracking arises in many engineering applications such as robotics~\cite{senanayake2016search}, smart grids~\cite{panigrahi2008adaptive} and economics~\cite{dias2006market},

 Swarm tracking arises in many engineering applications such as robotics, smart grids and economics, where a group  of decision-makers wish to track a target collectively. To solve the swarm tracking problem, one common practice is to propose a strategy based on the  consensus algorithms, where the decision-makers are guaranteed to reach the target after a sufficiently large horizon~\cite{Olfati2007survey,tsitsiklis1986distributed, Jadbabaie2003,chung2018survey}. Alternatively,   one can define a  cost function consisting of the tracking cost (penalizing the distance between every agent and  the target) and the formation cost (penalizing the  relative distances between the agents).  Given a differentiable parametrized strategy, gradient decent methods can be utilized to search for a locally optimal solution~\cite{Zhang2012}.   On the other hand, it is difficult  to find a scalable solution  for large-scale swarms, in practice. This is  because  there is often a set of state and action constraints, leading to a non-trivial feasible set, such that  any naive  solution suffers from the curse of dimensionality with respect to the number of  decision-makers.

    To address the above shortcoming, we introduce \emph{deep structured tracking problem} wherein a large number of decision  makers  wish  to  track  a global target while taking into account their local targets.  The idea of deep structured tracking   stems from a newly emergent class of large-scale decentralized control systems called \emph{deep structured teams}~\cite{Jalal2019MFT,Jalal2019risk,Jalal2020CCTA,
    Jalal2020Automatica_On,Vida2020CDC,Masoud2020CDC,
    Jalal2021CDC_KF,Jalal2020Nash, Jalal2019TSNE}.  In deep structured teams/games,   decision-makers are coupled through a set of linear regressions of the states and actions of  the decision-makers, which is similar in spirit to the coupling of neurons in  feed-forward deep neural networks (DNN). For example,  it is shown in this paper that a feed-forward DNN with rectified linear unit activation function may be viewed as a special case of deep structured teams, where neurons are  agents with affine dynamics and affine constraints, and layers are time steps.  In general, a key step to obtain a low-dimensional solution for the  linear quadratic deep structured model is to  decompose the  optimization problem by a gauge transformation,  initially  proposed in~\cite{arabneydi2016new} and showcased in risk-sensitive model~\cite{Jalal2019risk}, decentralized estimation~\cite{Jalal2021CDC_KF}, reinforcement learning~\cite{Jalal2020CCTA,
    Vida2020CDC,Masoud2020CDC},~nonzero-sum game~\cite{Jalal2020Automatica_On},  minmax optimization~\cite{Jalal2019LCSS},  leader-follower tracking~\cite{JalalCCECE2018, JalalCDC2018}, and mean-field teams~\cite{JalalCDC2015, Jalal2017linear}.

To consider state and action constraints, we use receding horizon control in this article as a popular  industrial methodology, also known as  model predictive control,   rolling horizon planning, dynamic matrix control and dynamic linear programming~\cite{garcia1989model,qin2003survey,Alessio2009,Boyd2009}. In particular, we propose a family of two low-dimensional  receding horizon control problems,  where a carefully constructed linear combination of their solutions provides a feasible solution.
 In addition, we generalize our main results to include  multiple sub-populations, multiple features and cyber-physical attacks. 
In contrast to the  consensus-based algorithms, our approach  is a  control-based algorithm that is scalable with respect to the number of agents; see Subsection~\ref{sec:cohesive} for similarities and  differences between consensus and  (optimal) control algorithms. 

The remainder of the paper is organized as follows. In Section~\ref{sec:problem},  the problem is formulated and  in Section~\ref{sec:main},  the main results are obtained. In Sections~\ref{sec:multiple} and~\ref{sec:cyber}, the main results are extended  to   multiple sub-populations, multiple features and cyber-physical attacks.  A numerical example is presented in Section~\ref{sec:numerical} to verify  the obtained theoretical results.  In Section~\ref{sec:conclusions},  some conclusions  are drawn.

\section{Problem formulation}\label{sec:problem}
Throughout the paper, $\mathbb{R}$ and $\mathbb{N}$ refer to the sets of real and natural numbers, respectively. Given any $n \in \mathbb{N}$, $\mathbb{N}_n$ is the finite set $\{1,2,\ldots,n\}$. For  any vector $x$ and square matrix $Q$, $\|x\|_Q=x^\intercal Q x$. Short-hand notation $x_{a:b}$ denotes the set $\{x_a,\ldots,x_b\}$ for any $a \leq b \in \mathbb{N}$. All vector inequalities in this paper are element-wise, unless stated otherwise. Given two vectors $a$ and $b$  of the same size, $\min(a,b)$ and $\max(a,b)$ refer, respectively, to a vector whose elements are the minimum and maximum of the elements of  $a$ and~$b$.

Consider a swarm of  $n \in \mathbb{N}$ decision-makers (agents). Let $x^i_t \in \mathbb{R}^{d_x}$ and $u^i_t \in \mathbb{R}^{d_u}$ denote the state and action of agent $i \in \mathbb{N}_n$ at time $t \in \mathbb{N}$. Let $\alpha_i \in \mathbb{R}$ denote the \emph{influence factor} of  agent $i$ at the focal point  of the swarm such that
\begin{equation}\label{eq:deep_state_alpha}
\bar x^\alpha_t:= \frac{1}{n}\sum_{i=1}^n \alpha_{i} x^i_t,
\end{equation}
where $\bar x^\alpha_t$ is called the \emph{center of swarm} at time $t$.  Similarly, define the following linear regression in the action space:
\begin{equation}\label{eq:deep_action_alpha}
\bar u^\alpha_t:= \frac{1}{n}\sum_{i=1}^n \alpha_{i} u^i_t.
\end{equation}

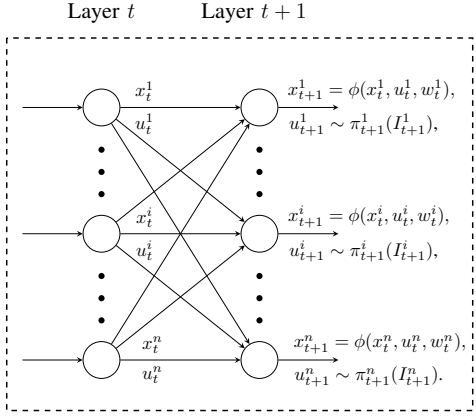
\begin{figure}[t!]
\hspace*{0.2cm}
\scalebox{0.7}{
\begin{tikzpicture}[x=1.5cm, y=1.2cm, >=stealth]
  \draw[font=\large](0,3cm) node {Layer $t$} ;
\draw[font=\large](2.9cm,3cm) node {Layer $t+1$} ;
\draw[font=\large](6.5cm,-5cm) node {} ;
\draw[black, thick, dashed] (-1.2,2.1) rectangle (4.7,-3.8);

\foreach \m/\l [count=\y] in {1,missing,2,missing,3}
  \node [every neuron/.try, neuron \m/.try] (input-\m) at (0,2-\y) {};

\foreach \m/\l [count=\y] in {1,missing,2,missing,3}
  \node [every neuron/.try, neuron \m/.try ] (output-\m) at (2,2-\y) {};

  \draw [<-] (input-1) -- ++(-1,0)
    node [above,midway] {\hspace{3.5cm}$x^1_t$}
        node [below,midway] {\hspace{3.5cm}$u^1_t$};
     \draw [<-] (input-2) -- ++(-1,0)
    node [above,midway] {\hspace{3.5cm}$x^i_t$}
        node [below,midway] {\hspace{3.5cm}$u^i_t$};
        
          \draw [<-] (input-3) -- ++(-1,0)
    node [above,midway] {\hspace{3.8cm}$x^{n}_t$}
        node [below,midway] {\hspace{3.8cm}$u^{n}_t$};

  \draw [->]  (output-1) -- ++(1,0)  node[above, midway]  
   {\hspace*{2.4cm}$x^1_{t+1}= \phi(x^1_t,u^1_t,w^1_t),$}
     node [below,midway] {\hspace{2.1cm}$u^1_{t+1} \sim \pi^1_{t+1}( I^1_{t+1})$,};   
      \draw [->]  (output-2) -- ++(1,0)  node[above, midway]     {\hspace*{2.3cm}$x^i_{t+1}= \phi(x^i_t,u^i_t,w^i_t),$}
     node [below,midway] {\hspace{2.1cm}$u^i_{t+1} \sim \pi^i_{t+1}( I^i_{t+1})$,};  
    
      \draw [->]  (output-3) -- ++(1,0)  node[above, midway]  { \hspace*{2.6cm} $x^{n}_{t+1}= \phi(x^{n}_t,u^{n}_t,w^{n}_t),$}
     node [below,midway] {\hspace{2.3cm}$u^{n}_{t+1} \sim \pi^{n}_{t+1}( I^{n}_{t+1})$.};     
     
    \foreach \i in {1,...,3}
  \foreach \j in {1,...,3}
    \draw [->] (input-\i) -- (output-\j);    
\end{tikzpicture}}
\caption{The interaction (coupling) between agents in deep structured teams is similar in spirit to that of neurons in a
deep feed-forward neural network.  In this paper, the dynamics $\phi$ is a deterministic affine function, $I^i_t$ is the information set of agent $i$, and strategy $\pi$ is computed by  a set of local and global Riccati equations and quadratic programmings for the unconstrained and  constrained cases, respectively.  }\label{fig:DNN}
\end{figure}

From the terminology of deep structured teams, aggregate variables $\bar x^\alpha_t$ and $\bar u^\alpha_t$ are also   referred to as  \emph{deep state} and \emph{deep action}, respectively. The reason for such naming is that $\bar x^\alpha_t$ and $\bar u^\alpha_t$, $t\geq 1$, may be viewed as a mapping from  the initial states (the input of control system)  to a real-valued vector, where the mapping is constructed  by~$t$ sequential layers of some  parallel operations, resembling  a feed-forward deep neural network (DNN); see Figures~\ref{fig:DNN} and~\ref{fig:DNN2}. Notice that $\bar x^\alpha_t$ and $\bar u^\alpha_t$ are normalized with respect to~$n$  because we are interested in  applications with large $n$.  Let  $\bar \alpha$ denote  the  average of influence factors, i.e.
% \begin{equation}
$\bar{\alpha}:=\frac{1}{n}\sum_{i=1}^n \alpha_i$.
%\end{equation}

\begin{definition}[Center of mass]\label{def:center}
The center of swarm is called the center of mass if $\bar \alpha=1$. In such a case, there exists a set of  scalars $\{m_i \in \mathbb{R}, \forall i \in \mathbb{N}_n\}$ such that
\begin{equation}
\bar x^\alpha_t:= \frac{1}{M}\sum_{i=1}^n m_{i} x^i_t,
\end{equation}
where  $m_i:=\frac{1}{n}\alpha_i M$ and $M:=\frac{1}{n} \sum_{i=1}^n m_i$.  An important special case of the center of mass is where  $\{\alpha_i=\frac{nm_i}{M}\geq~0, \forall i \in \mathbb{N}_n\}$ is a convex combination of scalars.
\end{definition}
 At  time $t \in \mathbb{N}$, the state of agent $i \in \mathbb{N}_n$ evolves as:
\begin{equation}\label{eq:dynamics}
x^i_{t+1}=A_t x^i_t +B_t u^i_t,
\end{equation}
where $A_t$ and $B_t$ are matrices with appropriate dimensions.

\subsection{Cost function}

Let $r^i_t \in \mathbb{R}^{d_x}$ denote  the local reference of agent $i$ at time $t \in \mathbb{N}_T$ indicating the center of its safe zone  and $s_t \in \mathbb{R}^{d_x}$ denote the the global reference  of the swarm determining the desired trajectory  of the center of agents. To this end, we define a cost function with  a \emph{common} penalty function penalizing the  mismatch between the center of swarm $\bar x^\alpha_t$ and the global reference $s_t$. 
 More precisely, for any $i \in \mathbb{N}_n$ and $t \in \mathbb{N}_T$, the cost of agent $i \in \mathbb{N}_n$ is defined as:
\begin{equation}\label{eq:per_step_cost}
c^i_t=\gamma_i(\|x^i_t - r^i_t\|_{Q_t}+ \|u^i_t\|_{R_t})+ \|\bar{x}^\alpha_t - s_t\|_{\bar Q_t} +  \|\bar u^\alpha_t\|_{\bar R_t}, 
\end{equation}
where $\gamma_i>0$ denotes the importance of the local cost of agent $i$ and matrices  $Q_t$, $R_t$,  $\bar Q_t$ and $\bar R_t$ are symmetric  with  appropriate dimensions.  The first term  forces  agent $i$ to be close to its safe zone whose center is given by  $r^i_t$ and  the second one considers  energy consumption of agent $i$. The third term incentivizes  the center of  the swarm to track  the global target whereas  the forth term (which can be set to zero, i.e. $\bar R_t=\mathbf 0$) smooths the trajectory of the center of  the swarm by preferring  small  values for the deep action $\bar u^\alpha_t$.

 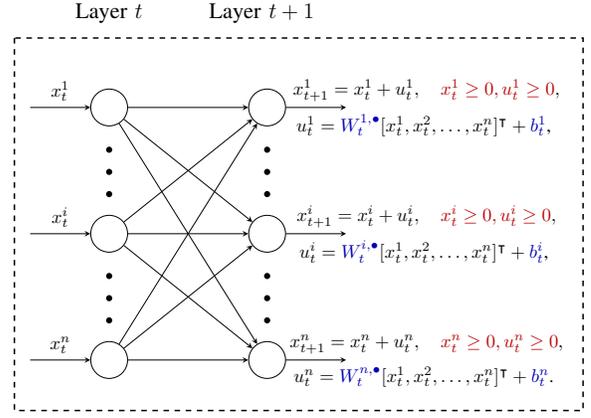
\begin{figure}[t!]
 \scalebox{.7}{
 \hspace{.6cm}
\begin{tikzpicture}[x=1.5cm, y=1.2cm, >=stealth]
  \draw[font=\large](0,3cm) node {Layer $t$} ;
\draw[font=\large](2.9cm,3cm) node {Layer $t+1$} ;
\draw[font=\large](6.5cm,-5cm) node {} ;
\draw[black, thick, dashed] (-1.2,2.1) rectangle (6,-3.8);

\foreach \m/\l [count=\y] in {1,missing,2,missing,3}
  \node [every neuron/.try, neuron \m/.try] (input-\m) at (0,2-\y) {};

\foreach \m/\l [count=\y] in {1,missing,2,missing,3}
  \node [every neuron/.try, neuron \m/.try ] (output-\m) at (2,2-\y) {};

\foreach \l [count=\i] in {1,i,n}
  \draw [<-] (input-\i) -- ++(-1,0)
    node [above, midway] {$x^\l_t$};

  \draw [->]  (output-1) -- ++(1,0)  node[above, midway]   {\hspace*{4.2cm}$x^1_{t+1}=x^1_t+u^1_t, \quad \textcolor{red}{x^1_t \geq 0, u^1_t \geq 0},$} node [below,midway] {\hspace{4.1cm}$u^1_t=\textcolor{blue}{W^{1, \bullet}_t}  [ x^1_t, x^2_t, \ldots,x^n_t]^\intercal + \textcolor{blue}{b^1_t}$,};   
    
      \draw [->]  (output-2) -- ++(1,0)  node[above, midway]   {\hspace*{4.2cm}$x^i_{t+1}=x^i_t+u^i_t, \quad \textcolor{red}{x^i_t \geq 0, u^i_t \geq 0},$} node [below,midway] {\hspace{4.1cm}$u^i_t=\textcolor{blue}{W^{i, \bullet}_t}  [ x^1_t, x^2_t, \ldots,x^n_t]^\intercal + \textcolor{blue}{b^i_t}$,};
    
      \draw [->]  (output-3) -- ++(1,0)  node[above, midway]   {\hspace*{4.2cm}$x^n_{t+1}=x^n_t+u^n_t, \quad \textcolor{red}{x^n_t \geq 0, u^n_t \geq 0},$} node [below,midway] {\hspace{4.1cm}$u^n_t=\textcolor{blue}{W^{n, \bullet}_t}  [ x^1_t, x^2_t, \ldots,x^n_t]^\intercal + \textcolor{blue}{b^n_t}$.};
    
    \foreach \i in {1,...,3}
  \foreach \j in {1,...,3}
    \draw [->] (input-\i) -- (output-\j);
\end{tikzpicture}}
\caption{A feed-forward DNN with Rectified Linear Unit (ReLU) activation function may be viewed as a special case of deep structured teams, where neurons are agents, and layers are time steps. In particular, the dynamics of agents is a single integrator, which is an affine function, with affine constraints, wherein  $W_t$ and $b_t$ represent the weight matrix and bias vector, respectively.  An alternative   formulation of  DNN with ReLU function is  where $A_t=0$ and $x^i_{t+1}=u^i_t, u^i_t \geq 0, \forall i \in~\mathbb{N}_n$.}\label{fig:DNN2}
\vspace{-.4cm}
\end{figure}
 \subsection{Problem statement}
The agents are interested to collaborate to minimize a common cost function defined as
\begin{equation}\label{eq:total_cost}
J_n:= \frac{1}{n} \sum_{t=1}^T\sum_{i=1}^n  c^i_t(x^1_{t},\ldots,x^n_{t},u^1_{t},\ldots,u^n_{t}).
\end{equation}

 \begin{remark}
\emph{Note  that  our main results hold for any setup in which the per-step cost in~\eqref{eq:total_cost} can be represented as a summation of local cost functions  (in terms of local states and local actions) and global cost functions (in terms of deep states and deep actions). Below, we present two such cases.
\begin{itemize}
\item Any weighted cross-terms in $c^i_t$ can be formulated as:
\begin{equation}
\frac{1}{n}\sum_{i=1}^n  \alpha_i (x^i_t)^\intercal Q_t \bar x^\alpha_t=\|\bar x^\alpha_t\|_{Q_t}.
 \end{equation}
% For $\mu_i=\alpha_i$, one has: $\frac{1}{n}\sum_{i=1}^n  \alpha_i (x^i_t)^\intercal Q_t \bar x^\alpha_t= \|\bar x^\alpha_t\|_{Q_t}.$
\item Any weighted  tracking cost  can be expressed as:
\begin{multline}
\frac{1}{n}\sum_{i=1}^n  \alpha_i \|x^i_t -F_t \bar x^\alpha_t\|_{Q_t}=\frac{1}{n} \sum_{i=1}^n \alpha_i\| x^i_t\|_{Q_t}+ \|\bar x^\alpha_t  \|_{\bar Q_t},
 \end{multline}
where $\bar Q_t:= (I-F_t)^\intercal Q (I- F_t) -Q_t$.
\end{itemize}
}
\end{remark}

\begin{problem}[Optimal control]
Find a scalable  optimal strategy such that the team cost in~\eqref{eq:total_cost} is minimized, i.e.,
\begin{align}
&\hspace{-1cm}J^\ast_n:=\min_{u^1_{1:T},\ldots,u^n_{1:T}}  \frac{1}{n}\sum_{t=1}^T \sum_{i=1}^n c^i_t(x^1_{t},\ldots,x^n_{t},u^1_{t},\ldots,u^n_{t}),\\
\text{\emph{subject to:}}& \quad x^i_{t+1}=A_t x^i_t+B_t u^i_t, \quad \forall i \in \mathbb{N}_n, \forall t \in \mathbb{N}_T.
\end{align}
\end{problem}

\begin{problem}[Receding horizon control (RHC)]
Develop a scalable  RHC  for the following constrained optimization:
\begin{align}
&\hspace{-1cm} \min_{u^1_{1:T},\ldots,u^n_{1:T}}   \frac{1}{n}\sum_{t=1}^T \sum_{i=1}^n  c^i_t(x^1_{t},\ldots,x^n_{t},u^1_{t},\ldots,u^n_{t}),\\
\text{\emph{subject to:}}& \quad x^i_{t+1}=A_t x^i_t+B_t u^i_t, \quad \forall i \in \mathbb{N}_n,  \forall t \in \mathbb{N}_T,\\
&\quad a \leq  x^i_t \leq  b, \quad a, b \in \mathbb{R}^{d_x} ,\\
&\quad c \leq u^i_t \leq d, \quad c, d \in \mathbb{R}^{d_u},\\
& \quad \bar a \leq \bar x^\alpha_t \leq  \bar b, \quad \bar a, \bar b \in \mathbb{R}^{d_x}, \\
& \quad \bar c \leq \bar u^\alpha_t \leq  \bar d, \quad \bar c, \bar d \in \mathbb{R}^{d_x}.
\end{align}
\end{problem}

For the special case of  non-negative influence factors (e.g. convex combination),   the effective lower and upper bounds  imposed on the  deep state and deep  action in Problem~2 are $\max(\bar \alpha a, \bar a)$, $\max(\bar \alpha c, \bar c)$, $\min(\bar \alpha b, \bar b)$ and $\min(\bar \alpha d, \bar d)$.

\subsection{Main challenges and contributions}
The first challenge is the curse of dimensionality with respect to the number of agents, where the augmented matrices are fully dense.  To overcome this challenge, we use a gauge transformation (i.e.,  a change of coordinates) to decompose the optimization problem  in order to  obtain  a low-dimensional solution in terms of two scale-free Riccati equations. We show that  the centralized solution can be implemented in a distributed  manner wherein every agent needs  access to  only  the deep state  (rather than the entire joint state). The second challenge is  that the  feasible set of the constrained optimization problem (Problem 2) is not fully decomposable by the gauge transformation, which means that the solution of Problem~1 is not directly applicable in this case. To this end, we propose two scale-free  RHCs  under mild conditions   for every agent. We show that a carefully constructed   linear combination of the solutions of the  proposed RHCs provides a feasible solution for  Problem~2.
\section{Main results for Problems 1 and 2}\label{sec:main}
In this section, we present the main results for Problems~1 and 2. Prior to delving into theoretical results, we define two types of tracking as follows.

\begin{definition}[Strong and weak swarm tracking]\label{def:strong}
When the center of swarm reduces to the center of mass,  the swarm tracking is called  strong; otherwise,   it is called  weak.
\end{definition}

\begin{proposition}
Suppose that the tracking is weak and the center of swarm is not at the origin. Then, there is at  least one agent that does not converge to  the  center of swarm.
  \end{proposition}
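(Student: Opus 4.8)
The plan is to argue by contradiction, using the internal consistency that the center of swarm imposes on the agents. Suppose, against the claim, that every agent converges to the center of swarm, so that $x^i_t \to \bar x^\alpha_t$ for all $i \in \mathbb{N}_n$. Writing $c$ for the common limiting point of the agents, this supposition forces $x^i_t \to c$ for each $i$; and since $\bar x^\alpha_t$ in \eqref{eq:deep_state_alpha} is a fixed linear combination of the $x^i_t$, it is continuous in the states and therefore $\bar x^\alpha_t \to c$ as well.

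First I would pass to the limit inside \eqref{eq:deep_state_alpha}. By linearity,
\[
c \;=\; \lim_t \bar x^\alpha_t \;=\; \frac{1}{n}\sum_{i=1}^n \alpha_i \lim_t x^i_t \;=\; \Big(\frac{1}{n}\sum_{i=1}^n \alpha_i\Big) c \;=\; \bar\alpha\, c,
\]
where the last equality is just the definition $\bar\alpha = \frac{1}{n}\sum_{i=1}^n \alpha_i$. Rearranging gives the single scalar identity $(1-\bar\alpha)\,c = 0$, which is the crux of the argument.

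Next I would invoke the two hypotheses. By Definitions~\ref{def:center} and~\ref{def:strong}, weak tracking means the center of swarm is not the center of mass, i.e. $\bar\alpha \neq 1$; hence $1-\bar\alpha \neq 0$ and the identity forces $c=0$. But $c$ is the limit of the center of swarm, so this says the center of swarm converges to the origin, contradicting the standing hypothesis that the center of swarm is not at the origin. The contradiction refutes the supposition that all agents converge to $\bar x^\alpha_t$, proving that at least one agent does not.

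The computation itself is a one-line algebraic consistency check, so I do not expect a substantive technical obstacle; the only care needed is in the meaning of \emph{convergence to the center of swarm}. One must be sure the common limit $c$ exists and is shared before passing to the limit in \eqref{eq:deep_state_alpha}. If one prefers the purely static reading---all agents coinciding with $\bar x^\alpha_t$ at a given time---the same step yields $\bar x^\alpha_t = \bar\alpha\,\bar x^\alpha_t$ verbatim, so that $\bar x^\alpha_t = 0$ under weak tracking, and the limit bookkeeping can be dispensed with entirely.
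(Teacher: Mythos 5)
Your proposal is correct and rests on exactly the same idea as the paper's proof: averaging the supposed coincidence of all agents with the center of swarm yields $\bar\alpha\,\bar x^\alpha_t = \bar x^\alpha_t$, which under weak tracking ($\bar\alpha \neq 1$) forces $\bar x^\alpha_t = 0$, contradicting the hypothesis. The paper's own proof is precisely your ``static reading'' at a single time $t$ (assume $x^i_t = \bar x^\alpha_t$ for all $i \in \mathbb{N}_n$), so the limit bookkeeping in your main version is superfluous --- as you yourself note in your closing paragraph.
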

\begin{proof}
The proof follows from contradiction. Suppose all agents converge to the center of swarm at some time $t \in \mathbb{N}$, i.e.  $x^i_t=\bar x^\alpha_t$, $\forall i \in \mathbb{N}_n$. From~\eqref{eq:deep_state_alpha},  one has $ (\frac{1}{n}\sum_{i=1}^n \alpha_i) \bar x^\alpha_t= \bar x^\alpha_t$, which holds if and only if   $\bar x^\alpha_t=~0$.
\end{proof}

 In general,  weak  tracking arises in various situations wherein the center of swarm is not properly balanced.  This unbalanced property  may be caused by an external force (e.g., cyber-physical attack) or by the designer (e.g., when agents wish to  monitor a target without getting close to it).

\subsection{Gauge transformation and Riccati equation}
The first step to solve a linear quadratic deep structured team is to use a gauge transformation, initially introduced in~\cite{arabneydi2016new} and showcased in~\cite{Jalal2020CCTA}, to define auxiliary variables as the deviation of the local variables from  deep (weighted) variables.  We  use the following gauge transformation:
\begin{equation}\label{eq:gauge}
\Delta x^i_t:= x^i_t - \frac{\alpha_i}{\gamma_i} \bar x^\alpha_t, \quad \Delta u^i_t:= u^i_t - \frac{\alpha_i}{\gamma_i} \bar u^\alpha_t, \quad \Delta r^i_t:= r^i_t - \frac{\alpha_i}{\gamma_i} \bar r^\alpha_t,
\end{equation}
where
%\begin{equation}
 $\bar r^\alpha_t:=\frac{1}{n}\sum_{i=1}^n \alpha_i r^i_t$.
%\end{equation}
From~\eqref{eq:dynamics}, one has
\begin{equation}\label{eq:global_dynamics}
\Delta x^i_{t+1}=A_t \Delta x^i_t+ B_t \Delta u^i_t,
\end{equation}
and
\begin{equation}\label{eq:local_dynamics}
\bar x^\alpha_{t+1}=A_t \bar x^\alpha_t + B_t \bar u^\alpha_t.
\end{equation}
Let $\mu:= \frac{1}{n} \sum_{i=1}^n \frac{\alpha_i^2}{\gamma_i}$.
\begin{lemma}\label{lemma:DST_cost}
 The per-step cost function in equation~\eqref{eq:total_cost} at any  time $t \in \mathbb{N}$ can be written as:
\begin{align}
&\Big(\frac{1}{n} \sum_{i=1}^n \gamma_i( \|\Delta x^i_t - \Delta r^i_t\|_{Q_t} + \| \Delta u^i_t\|_{R_t})\Big) + \| \bar x^\alpha_t -s_t \|_{\bar Q_t}\\
&+
(2 -\mu)\|\bar x^\alpha_t - \bar r^\alpha_t \|_{Q_t} +  \|\bar u^\alpha_t\|_{\bar R_t}+ (2 -\mu) \|\bar u^\alpha_t\|_{R_t}.
\end{align}
\end{lemma}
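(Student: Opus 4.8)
The plan is to substitute the inverted gauge transformation into the averaged per-step cost and collect terms. First I would note that the two global summands $\|\bar x^\alpha_t - s_t\|_{\bar Q_t}$ and $\|\bar u^\alpha_t\|_{\bar R_t}$ appearing in $c^i_t$ carry no dependence on the index $i$, so after applying $\frac{1}{n}\sum_{i=1}^n$ they reappear verbatim and already match the corresponding terms in the claimed expression. It therefore suffices to rewrite the local part $\frac{1}{n}\sum_{i=1}^n \gamma_i(\|x^i_t - r^i_t\|_{Q_t} + \|u^i_t\|_{R_t})$, and since the action term is just the special case of the state term with a vanishing reference, I would carry out the state computation in full and then transcribe it for the action part.

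For the state part I would invert the gauge transformation~\eqref{eq:gauge} to write $x^i_t - r^i_t = (\Delta x^i_t - \Delta r^i_t) + \frac{\alpha_i}{\gamma_i}(\bar x^\alpha_t - \bar r^\alpha_t)$ and expand the quadratic form, using symmetry of $Q_t$ to merge the cross terms:
\begin{align}
\gamma_i \|x^i_t - r^i_t\|_{Q_t} &= \gamma_i\|\Delta x^i_t - \Delta r^i_t\|_{Q_t} + \frac{\alpha_i^2}{\gamma_i}\|\bar x^\alpha_t - \bar r^\alpha_t\|_{Q_t} \\
&\quad + 2\alpha_i(\Delta x^i_t - \Delta r^i_t)^\intercal Q_t(\bar x^\alpha_t - \bar r^\alpha_t).
\end{align}
Averaging over $i$, the middle term produces $\mu\|\bar x^\alpha_t - \bar r^\alpha_t\|_{Q_t}$ immediately from the definition of $\mu$.

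The crux is the cross term, which I would evaluate through the weighted aggregate of the transformed variables. Using~\eqref{eq:deep_state_alpha} together with the definitions of $\bar r^\alpha_t$ and $\mu$, one computes
\begin{equation}
\frac{1}{n}\sum_{i=1}^n \alpha_i \Delta x^i_t = \bar x^\alpha_t - \mu\, \bar x^\alpha_t = (1-\mu)\bar x^\alpha_t,
\end{equation}
and identically $\frac{1}{n}\sum_{i=1}^n \alpha_i \Delta r^i_t = (1-\mu)\bar r^\alpha_t$, so the averaged cross term collapses to $2(1-\mu)\|\bar x^\alpha_t - \bar r^\alpha_t\|_{Q_t}$. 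Adding this to the $\mu$-weighted square gives the coefficient $2(1-\mu)+\mu = 2-\mu$, which is exactly the term $(2-\mu)\|\bar x^\alpha_t - \bar r^\alpha_t\|_{Q_t}$ in the statement; repeating the identical argument for the action part with $R_t$ and zero reference yields $(2-\mu)\|\bar u^\alpha_t\|_{R_t}$. The one step I would verify most carefully is this aggregation identity, since the factor $(1-\mu)$ it produces is precisely what conspires with the $\mu$ from the squared term to collapse into the clean coefficient $2-\mu$; the remainder is routine bookkeeping relying only on the symmetry of $Q_t$ and $R_t$ and the index-independence of the global terms.
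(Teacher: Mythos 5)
Your proposal is correct and follows essentially the same route as the paper: the paper's own proof rests on exactly the cross-term identities $\frac{1}{n}\sum_{i=1}^n \alpha_i (\Delta x^i_t -\Delta r^i_t)^\intercal Q_t (\bar x^\alpha_t - \bar r^\alpha_t)=(1 -\mu) \|\bar x^\alpha_t - \bar r^\alpha_t\|_{Q_t}$ and its action-space analogue, which you derive via the aggregation identity $\frac{1}{n}\sum_{i=1}^n \alpha_i \Delta x^i_t = (1-\mu)\bar x^\alpha_t$ and combine with the $\mu$-weighted square to obtain the coefficient $2-\mu$. The only difference is that you supply the intermediate computation the paper leaves implicit.
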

\emph{Proof.}
The proof follows directly from~\eqref{eq:per_step_cost}, the gauge transformation~\eqref{eq:gauge} and the fact that
\begin{equation}
\frac{1}{n} \sum_{i=1}^n \alpha_i (\Delta x^i_t -\Delta r^i_t)^\intercal Q_t (\bar x^\alpha_t - \bar r^\alpha_t)=(1 -\mu) \|\bar x^\alpha_t - \bar r^\alpha_t\|_{Q_t},
\end{equation}
and
\begin{equation}
\frac{1}{n} \sum_{i=1}^n \alpha_i (\Delta u^i_t)^\intercal R_t (\bar u^\alpha_t)=(1 -\mu) \|\bar u^\alpha_t \|_{R_t}.  \hspace{1.9cm} \blacksquare
\end{equation}
Denote $\mathbf Q_t:= (2 -\mu) Q_t +  \bar Q_t$ and $\mathbf R_t:=(2 -\mu) R_t+  \bar R_t$.
\begin{assumption}\label{ass:convexity_DST}
At any time $t \in \mathbb{N}$,  $Q_t$ and $\mathbf Q_t$ are positive semi-definite and $ R_t$ and $\mathbf R_t$ are positive definite.
\end{assumption}
Define  local and global Riccati equations as follows:
\begin{equation}\label{eq:Riccati}
\begin{cases}
P_{t}= Q_t+ A_t^\intercal P_{t+1} A_t - A_t^\intercal P_{t+1} B_t (B_t^\intercal P_{t+1} B_t+ R_t)^{-1}\\
\qquad \times B_t^\intercal P_{t+1} A_t,  \forall t \in \mathbb{N}_{T-1},\\
\mathbf P_{t}= \mathbf Q_t+ A_t^\intercal \mathbf P_{t+1} A_t - A_t^\intercal \mathbf P_{t+1} B_t (B_t^\intercal \mathbf P_{t+1} B_t+ \mathbf R_t)^{-1}\\
\qquad \times B_t^\intercal \mathbf P_{t+1} A_t,  \forall t \in \mathbb{N}_{T-1},
\end{cases}
\end{equation} 
with $P_T=Q_T$ and $\mathbf P_T=\mathbf Q_T$.

\begin{theorem}\label{thm:1}
Let Assumption~\ref{ass:convexity_DST} hold. The optimal strategy  of agent $i \in \mathbb{N}_n$ at any time $t \in \mathbb{N}_{T-1}$ is given by:
\begin{equation}
u^{\ast,i}_t= \theta^\ast_t x^i_t + \frac{\alpha_i}{\gamma_i} (\bar \theta^\ast_t - \theta^\ast_t) \bar x^\alpha_t+ L_{t}v^i_{t+1}+ \frac{\alpha^i}{\gamma_i} \bar L_{t}\bar v^\alpha_{t+1},
\end{equation}
where  gain matrices $\{\theta^\ast_t, \bar \theta^\ast_t, L_t,\bar L_{t}\}$ and correction signals $\{ \{v^i_t \}_{i=1}^n, \bar v_t\}$ are obtained from  the solution of the local and global Riccati equations~\eqref{eq:Riccati}  where for  any $t \in \mathbb{N}_{T-1}$:
\begin{equation}
\begin{cases}
 \theta^\ast_t:=-(B_t^\intercal P_{t+1} B_t + R_t)^{-1} B_t^\intercal  P_{t+1} A_t,\\
 L_t:= (B_t^\intercal  P_{t+1} B_t +  R_t)^{-1} B_t^\intercal,\\
\bar \theta^\ast_t:=-(B_t^\intercal \mathbf P_{t+1} B_t + \mathbf R_t)^{-1} B_t^\intercal \mathbf P_{t+1} A_t,\\
\bar L_t:= (B_t^\intercal \mathbf P_{t+1} B_t + \mathbf R_t)^{-1} B_t^\intercal,\\
\end{cases}
\end{equation}
and 
\begin{equation}
\begin{cases}
v^i_{t}:=(A_t + B_t \theta^\ast_{t})^\intercal v^i_{t+1} + Q_t \Delta r^i_t,  \quad i \in \mathbb{N}_n, \\
v^i_{T}:=Q_T \Delta r^i_{T}, \quad  i \in \mathbb{N}_n, \\
\bar v_{t}:=(A_t + B_t \bar \theta^\ast_{t})^\intercal \bar v_{t+1} + (2- \mu) Q_t \bar r^\alpha_t+  \bar Q_t s_t, \\
\bar v_{T}:=(2- \mu) Q_T \bar r^\alpha_T + \bar Q_T s_T.
\end{cases}
\end{equation}
\end{theorem}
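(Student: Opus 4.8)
The plan is to turn the single coupled optimization over $\{u^i_t\}_{i=1}^n$ into two standard finite-horizon LQ tracking problems by means of the gauge transformation~\eqref{eq:gauge}. First I would invoke Lemma~\ref{lemma:DST_cost} to rewrite the running cost as the sum of a \emph{deviation} part, $\frac{1}{n}\sum_{i=1}^n\gamma_i(\|\Delta x^i_t-\Delta r^i_t\|_{Q_t}+\|\Delta u^i_t\|_{R_t})$, and a \emph{deep} part collecting the $\bar x^\alpha_t$- and $\bar u^\alpha_t$-terms, which after grouping $\|\bar u^\alpha_t\|_{\bar R_t}+(2-\mu)\|\bar u^\alpha_t\|_{R_t}=\|\bar u^\alpha_t\|_{\mathbf R_t}$ becomes $\|\bar x^\alpha_t-s_t\|_{\bar Q_t}+(2-\mu)\|\bar x^\alpha_t-\bar r^\alpha_t\|_{Q_t}+\|\bar u^\alpha_t\|_{\mathbf R_t}$. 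Crucially, the transformed dynamics~\eqref{eq:global_dynamics}--\eqref{eq:local_dynamics} decouple: the deviations $\Delta x^i_t$ and the deep state $\bar x^\alpha_t$ each evolve through the same pair $(A_t,B_t)$ but are driven by their own inputs $\Delta u^i_t$ and $\bar u^\alpha_t$. Thus the cost is a sum of $n$ identical per-agent tracking costs in $(\Delta x^i_t,\Delta u^i_t)$ plus one deep tracking cost in $(\bar x^\alpha_t,\bar u^\alpha_t)$.

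Next I would solve each LQ tracking problem by backward induction (completion of squares). For the deviation problem, positing a cost-to-go of the form $\|\Delta x^i_t\|_{P_t}-2(v^i_t)^\intercal\Delta x^i_t+\text{const}$ and minimizing the one-step Bellman expression yields the feedback $\Delta u^{\ast,i}_t=\theta^\ast_t\Delta x^i_t+L_t v^i_{t+1}$, the local Riccati recursion for $P_t$, and the adjoint recursion $v^i_t=(A_t+B_t\theta^\ast_t)^\intercal v^i_{t+1}+Q_t\Delta r^i_t$ with terminal value $Q_T\Delta r^i_T$. The identical computation for the deep problem, now with weights $\mathbf Q_t,\mathbf R_t$ and references $s_t,\bar r^\alpha_t$, produces $\bar u^{\ast,\alpha}_t=\bar\theta^\ast_t\bar x^\alpha_t+\bar L_t\bar v_{t+1}$, the global Riccati recursion for $\mathbf P_t$, and the stated recursion for $\bar v_t$, whose source term $(2-\mu)Q_t\bar r^\alpha_t+\bar Q_t s_t$ mirrors the two deep tracking targets. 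Assumption~\ref{ass:convexity_DST} (namely $Q_t,\mathbf Q_t\succeq0$ and $R_t,\mathbf R_t\succ0$) guarantees that the relevant inverses exist and that each problem is strictly convex, so these stationary points are the unique minimizers.

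Finally I would undo the gauge transformation. Substituting $\Delta u^{\ast,i}_t$ and $\bar u^{\ast,\alpha}_t$ into $u^i_t=\Delta u^i_t+\frac{\alpha_i}{\gamma_i}\bar u^\alpha_t$ and replacing $\Delta x^i_t=x^i_t-\frac{\alpha_i}{\gamma_i}\bar x^\alpha_t$ gives $u^{\ast,i}_t=\theta^\ast_t x^i_t+\frac{\alpha_i}{\gamma_i}(\bar\theta^\ast_t-\theta^\ast_t)\bar x^\alpha_t+L_t v^i_{t+1}+\frac{\alpha_i}{\gamma_i}\bar L_t\bar v_{t+1}$, which is exactly the claimed strategy.

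The hard part will be justifying that this decomposition is \emph{optimality-preserving} rather than merely a convenient rewriting. The deviation inputs and the deep input are not independent: from~\eqref{eq:gauge} one reads off the algebraic identity $\frac{1}{n}\sum_{i=1}^n\alpha_i\Delta u^i_t=(1-\mu)\bar u^\alpha_t$, so minimizing the two parts separately is a priori only a relaxation. I expect the clean way around this is to argue directly at the level of the original variables $\{u^i_t\}$: since the total cost is strictly convex in $\{u^i_t\}$ under Assumption~\ref{ass:convexity_DST}, stationarity is necessary and sufficient, and one verifies by backward induction that a cost-to-go of the decomposed form $\frac{1}{n}\sum_{i=1}^n\gamma_i(\|\Delta x^i_t\|_{P_t}-2(v^i_t)^\intercal\Delta x^i_t)+\|\bar x^\alpha_t\|_{\mathbf P_t}-2\bar v_t^\intercal\bar x^\alpha_t+\text{const}$ satisfies the Bellman equation, with the minimization over $\{u^i_t\}$ collapsing---precisely because of the $\frac{\alpha_i}{\gamma_i}$ weighting in~\eqref{eq:gauge}---onto the feedback above. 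It is exactly this coupling that forces the factors $\mu$ and $(2-\mu)$ into $\mathbf Q_t$ and $\mathbf R_t$, so that the deep Riccati equation, and not the local one, governs the deep action; tracking these factors faithfully through the induction is where the care is needed.
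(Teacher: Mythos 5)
Your first three paragraphs are, in substance, exactly the paper's proof: the paper likewise invokes Lemma~\ref{lemma:DST_cost} together with the decoupled dynamics \eqref{eq:global_dynamics} and \eqref{eq:local_dynamics} to split Problem~1 into $n+1$ LQ tracking problems (observing that $\gamma_i>0$ does not affect the local argmin), solves each by the standard Riccati/adjoint recursions you describe, and recombines via $u^{\ast,i}_t=\Delta u^{\ast,i}_t+\tfrac{\alpha_i}{\gamma_i}\bar u^{\ast,\alpha}_t$. On that constructive core you match the paper step for step.

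Where you depart from the paper is your last paragraph, and the concern you raise there is genuine rather than pedantic: the paper's proof consists of the relaxation argument alone and never checks the coupling constraint $\tfrac{1}{n}\sum_{i=1}^n\alpha_i\Delta u^i_t=(1-\mu)\bar u^\alpha_t$ that you correctly read off from \eqref{eq:gauge}. If you actually carry out the Bellman verification you sketch, it closes precisely when the relaxed solutions happen to satisfy this constraint. Setting $S_t:=\tfrac{1}{n}\sum_i\alpha_i\Delta x^i_t$ and $V_t:=\tfrac{1}{n}\sum_i\alpha_i v^i_t$, one has $S_1=(1-\mu)\bar x^\alpha_1$ and $V_t=(A_t+B_t\theta^\ast_t)^\intercal V_{t+1}+(1-\mu)Q_t\bar r^\alpha_t$, while consistency of the recombined control demands $\theta^\ast_t S_t+L_tV_{t+1}=(1-\mu)\bigl(\bar\theta^\ast_t\bar x^\alpha_t+\bar L_t\bar v_{t+1}\bigr)$; since $\theta^\ast_t\neq\bar\theta^\ast_t$ in general (the Riccati map is nonlinear in the weights), this holds identically only when $\mu=1$, i.e., when the weighted deviations average to zero --- e.g., $\gamma_i=\alpha_i$ with $\bar\alpha=1$, which is exactly the paper's simulation setting and recovers the classical mean-field team case. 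In that case the relaxed optimum is feasible for the constrained reparametrization, hence optimal, and your induction goes through. For $\mu\neq 1$ the relaxation is strict and the stated strategy is not optimal: take scalar $A_t=B_t=1$, $T=2$, $Q=2$, $R=\bar Q=1$, $\bar R=0$, $r^i\equiv s\equiv 0$, $\alpha_i\equiv\sqrt{1/2}$, $\gamma_i\equiv 1$ (so $\mu=1/2$); the recombined deep gain $(1-\mu)\theta^\ast_1+\mu\bar\theta^\ast_1\approx -0.697$, whereas the true optimal gain on the mean, from the Riccati data $(Q+\mu\bar Q,\,R+\mu\bar R)=(2.5,\,1)$, is $-2.5/3.5\approx -0.714$, with a strictly smaller cost. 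So your diagnostic is the right one, but the backward-induction fix you propose will not succeed in the generality of the theorem's statement; the honest conclusion of your fourth paragraph is that the decomposition is optimality-preserving exactly on the class where the weighted deviation averages vanish ($\mu=1$), a hypothesis that the paper's one-line proof leaves implicit and unverified.
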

\emph{Proof.}
The proof follows from equations~\eqref{eq:global_dynamics} and~\eqref{eq:local_dynamics} and Lemma~\ref{lemma:DST_cost}, where the optimization in~Problem~1 can be decomposed to $n+1$ smaller optimizations. More precisely,  there is a local  linear quadratic regulator (LQR) for every $i \in \mathbb{N}_n$ with state and action $\{\Delta x^i_t, \Delta u^i_t \}$ and tracking signal $\{\Delta r^i_t\}$. Since  $\gamma_i >0$, it does not affect the optimization problem. Therefore, one has for any $i \in \mathbb{N}_n$:
$\Delta u^{\ast,i}_t= \theta^\ast_t \Delta x^i_t+ L_t v^i_{t+1}
$. There is also one LQR  with  state and action $\{\bar  x^\alpha_t, \bar u^\alpha_t \}$ and tracking signals $\bar r^\alpha_t$ and $s_t$, where
$\bar  u^{\ast,\alpha}_t= \bar \theta^\ast_t \bar  x_t+ \bar L_t \bar v_{t+1}$. 
From  gauge transformation~\eqref{eq:gauge}, it results that:
\begin{equation}
\hspace{1.8cm} u^{\ast,i}_t=\Delta u^{\ast,i}_t + \frac{\alpha_i}{\gamma_i} \bar u^{\ast,\alpha}_t, i \in \mathbb{N}_n. \hspace{2cm} \blacksquare
\end{equation}

\subsection{Receding horizon control}
A naive way to solve the  centralized RHC in Problem 2 leads to a large-scale optimization problem that is intractable with respect to the number of agents. In addition,  the centralized RHC  does not necessarily decompose into scalable problems after the gauge transformation. This  is in contrast to the unconstrained model wherein the centralized solution  coincides with  two scalable optimal control problems. To overcome this hurdle, we propose two scalable RHC problems whose feasible sets are a subset of the feasible set of the  centralized RHC problem.  In particular,   to distinguish  between the  state and action  of the proposed RHC problems and those of the original Problem~2, we use notations $y$ and $v$ instead of $x$ and $u$, respectively. Define one local and one global RHC  problem as follows.

\begin{problem}[Local RHC]
For any agent $i \in \mathbb{N}_n$ and horizon $H \in \mathbb{N}$, find a solution for the following minimization:
\begin{align}
\hspace{-2cm}&\min_{\Delta v^i_{t:t+H}} \sum_{\tau=t}^{t+H} \|\Delta y^i_\tau - \Delta r^i_\tau \|_{Q_\tau}+ \|\Delta v^i_\tau \|_{R_\tau},\\
\text{\emph{s.t.}}\quad &  \Delta y^i_{\tau+1}=A_\tau \Delta y^i_\tau+B_\tau \Delta v^i_\tau,  \tau \in \{t,\ldots,t+H-1\},\\
&\quad \tilde a_i \leq  \Delta y^i_\tau \leq  \tilde b_i, \quad \tilde a_i, \tilde b_i \in \mathbb{R}^{d_x} ,\\
&\quad \tilde c_i \leq \Delta v^i_\tau \leq \tilde d_i, \quad \tilde c_i, \tilde d_i \in \mathbb{R}^{d_u}.
\end{align}
\end{problem}
\begin{problem}[Global RHC]
 Given any prediction horizon $H \in \mathbb{N}$, find a solution for the following minimization:
\begin{align}\label{eq:RHC2}
& \hspace{-.9cm}\min_{\bar v^\alpha_{t:t+H}} \sum_{\tau=t}^{t+H} \|\bar y^\alpha_\tau - \bar r^\alpha_\tau \|_{ (2- \mu) Q_\tau} + \|\bar y^\alpha_\tau - s_\tau \|_{  \bar Q_\tau}+ \|\bar v^\alpha_\tau \|_{\mathbf R_\tau}, \nonumber \\
\text{\emph{s.t.}}\quad &  \bar y^\alpha_{\tau+1}=A_\tau \bar y^\alpha_\tau+B_\tau \bar v^\alpha_\tau,  \tau \in \{t,\ldots,t+H-1\},\nonumber \\
&\quad  \underline a \leq  \bar  y^\alpha_\tau \leq  \underline b, \quad  \underline a, \underline b \in \mathbb{R}^{d_x},\\
&\quad \underline c \leq \bar v^\alpha_\tau \leq \underline d, \quad \underline c, \underline d \in \mathbb{R}^{d_u}.
\end{align}
\end{problem}

\begin{remark}
\emph{At any time $t \in \mathbb{N}_T$,  one can solve the above  open-loop control problems  by quadratic programming.  Notice that the feasible set of the proposed  RHC Problems 3 and 4 is not necessary equal to that of  the  RHC Problem~2.}
\end{remark}

 Now, we introduce a family of  bounds for Problems 3 and 4 such that their solution is valid for Problem 2. 
\begin{assumption}\label{ass:param1}
Let  $\alpha_i  \in (0,1],\alpha_i  \leq  \gamma_i$,  $\forall i \in \mathbb{N}_n$.  Let also $b, \bar b > \mathbf 0_{d_x}$,  $d, \bar d> \mathbf 0_{d_u}$, $a, \bar a < \mathbf 0_{d_u}$ and $d, \bar d < \mathbf 0_{d_u}$.
\end{assumption}

\begin{theorem}\label{thm:2}
Let Assumptions~\ref{ass:convexity_DST} and~\ref{ass:param1} hold.  For  any $\lambda \in (0,1)$ and $i \in \mathbb{N}_n$,   suppose  that the boundaries of the local and global RHC Problems 3 and 4 are given by:
\begin{align}\label{eq:param1}
\tilde a_i &:= \frac{\lambda}{1-\lambda} \underline a, \quad \tilde b_i:=  \frac{\lambda}{1-\lambda}  \underline{b},\\
\tilde c_i &:= \frac{\lambda}{1-\lambda}  \underline c, \quad \tilde d_i:= \frac{\lambda}{1-\lambda}  \underline d,\\
\underline a &:=(1-\lambda) \max(\bar \alpha a, \bar a), \quad\underline b:= (1-\lambda) \min(\bar \alpha b, \bar b),\\
\underline c &:=(1-\lambda)\max(\bar \alpha c, \bar c), \quad\underline d:= (1-\lambda) \min(\bar \alpha d, \bar d).
\end{align}
Then, at any time $t \in \mathbb{N}_T$, the following linear combination: 
\begin{equation}
u^i_t= \Delta v^i_t + \frac{\alpha_i}{\gamma_i} \bar v^\alpha_t, \quad i \in \mathbb{N}_n,
\end{equation}
is a feasible solution for  Problem~2.
\end{theorem}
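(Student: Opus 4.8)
The plan is to exploit the fact that the proposed control $u^i_t = \Delta v^i_t + \frac{\alpha_i}{\gamma_i}\bar v^\alpha_t$ is precisely the gauge transformation~\eqref{eq:gauge} run in reverse. First I would note that the local and global RHC dynamics in Problems~3 and~4 coincide with~\eqref{eq:dynamics}, and that each RHC window is initialized at the gauge decomposition of the measured state, so $\Delta y^i_t = x^i_t - \frac{\alpha_i}{\gamma_i}\bar x^\alpha_t$ and $\bar y^\alpha_t = \bar x^\alpha_t$ hold at the start of the window. By linearity of the common dynamics this correspondence is preserved, giving $x^i_t = \Delta y^i_t + \frac{\alpha_i}{\gamma_i}\bar y^\alpha_t$. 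Taking the $\alpha$-weighted average and using $\mu = \frac{1}{n}\sum_i \frac{\alpha_i^2}{\gamma_i}$ then yields $\bar x^\alpha_t = \frac{1}{n}\sum_{i=1}^n \alpha_i \Delta y^i_t + \mu\,\bar y^\alpha_t$, and analogously $\bar u^\alpha_t = \frac{1}{n}\sum_{i=1}^n \alpha_i \Delta v^i_t + \mu\,\bar v^\alpha_t$. It then remains to verify the four box constraints of Problem~2 on $x^i_t$, $u^i_t$, $\bar x^\alpha_t$ and $\bar u^\alpha_t$.

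For the two local constraints I would set $\beta_i := \frac{\alpha_i}{\gamma_i}$ and observe that Assumption~\ref{ass:param1} gives $\beta_i \in (0,1]$ and $\bar\alpha \in (0,1]$. Substituting $\tilde a_i = \frac{\lambda}{1-\lambda}\underline a$, $\tilde b_i = \frac{\lambda}{1-\lambda}\underline b$ together with $\underline a = (1-\lambda)\max(\bar\alpha a,\bar a)$ and $\underline b = (1-\lambda)\min(\bar\alpha b,\bar b)$ into $x^i_t = \Delta y^i_t + \beta_i \bar y^\alpha_t$, the factors $\frac{\lambda}{1-\lambda}$ and $(1-\lambda)$ cancel and the multiplier collapses to the convex-type coefficient $\lambda + \beta_i(1-\lambda) \in (0,1]$. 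Since the upper bounds $b,\bar b,d,\bar d$ are positive and the lower bounds $a,\bar a,c,\bar c$ are negative under Assumption~\ref{ass:param1}, the effective box $[\max(\bar\alpha a,\bar a),\min(\bar\alpha b,\bar b)]$ contains the origin, and scaling it by a coefficient in $(0,1]$ only shrinks it; hence $x^i_t \in [\max(\bar\alpha a,\bar a),\min(\bar\alpha b,\bar b)]$. Finally $\bar\alpha \le 1$ gives $\min(\bar\alpha b,\bar b) \le b$ and $\max(\bar\alpha a,\bar a) \ge a$, establishing $a \le x^i_t \le b$. The constraint $c \le u^i_t \le d$ follows verbatim with $c,d,\underline c,\underline d$ in place of $a,b,\underline a,\underline b$.

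For the two deep constraints I would use the aggregated identity $\bar x^\alpha_t = \frac{1}{n}\sum_i \alpha_i \Delta y^i_t + \mu \bar y^\alpha_t$. Bounding each $\Delta y^i_t$ by $\tilde a_i,\tilde b_i$ (valid since $\alpha_i > 0$) and $\bar y^\alpha_t$ by $\underline a,\underline b$ (valid since $\mu > 0$), and again cancelling the $\frac{\lambda}{1-\lambda}$ and $(1-\lambda)$ factors, the combined multiplier becomes $\bar\alpha\lambda + \mu(1-\lambda)$. The crucial estimate here is $\mu \le \bar\alpha$, which holds because $\frac{\alpha_i^2}{\gamma_i} = \alpha_i \frac{\alpha_i}{\gamma_i} \le \alpha_i$ under Assumption~\ref{ass:param1}; hence $\bar\alpha\lambda + \mu(1-\lambda) \le \bar\alpha \le 1$. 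Scaling the signed effective box by this coefficient in $(0,1]$ again only shrinks it, placing $\bar x^\alpha_t$ in $[\max(\bar\alpha a,\bar a),\min(\bar\alpha b,\bar b)] \subseteq [\bar a,\bar b]$ and, identically, $\bar u^\alpha_t$ in $[\max(\bar\alpha c,\bar c),\min(\bar\alpha d,\bar d)] \subseteq [\bar c,\bar d]$, which completes the verification.

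I expect the deep-variable constraints to be the main obstacle. The local constraints follow from the elementary bound $\beta_i \le 1$, but the deep ones require the less obvious aggregate inequality $\mu \le \bar\alpha$ to keep the cross-term multiplier at most one, and they also demand careful bookkeeping of the sign conventions in Assumption~\ref{ass:param1} so that the "scaling by a coefficient $\le 1$" argument shrinks each interval in the correct direction for the positive upper bounds and the negative lower bounds.
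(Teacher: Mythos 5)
Your proposal is correct and follows essentially the same route as the paper's proof: recompose the gauge transformation $y^i_t=\Delta y^i_t+\frac{\alpha_i}{\gamma_i}\bar y^\alpha_t$, verify the local box constraints via $\frac{\alpha_i}{\gamma_i}\leq 1$ and the sign conventions of Assumption~\ref{ass:param1} (your coefficient $\lambda+\beta_i(1-\lambda)\leq 1$ is exactly the paper's chain $(\frac{\lambda}{1-\lambda}+1)\underline a=\max(\bar\alpha a,\bar a)$), and verify the deep constraints through the weighted average $\frac{1}{n}\sum_i\alpha_i y^i_t=\frac{1}{n}\sum_i\alpha_i\Delta y^i_t+\mu\,\bar y^\alpha_t$. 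Your explicit statement of $\mu\leq\bar\alpha\leq 1$ merely spells out a bound the paper uses implicitly in its final display, so this is a presentational refinement rather than a different argument.
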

\begin{proof}
In the first step, we show that the above limits construct a non-empty set. To avoid repetition, we only prove  the results for those constraints imposed on state spaces because similar arguments hold for action spaces. Since $1-\lambda>0$,  $0< \bar \alpha \leq 1$,  $\max(\bar \alpha a, \bar a) < \boldsymbol{0}_{d_x}$ and $\min(\bar \alpha b, \bar b)> \boldsymbol{0}_{d_x}$, one can conclude that
 for every $i \in \mathbb{N}_n$,
\begin{equation}
\underline{a} < \boldsymbol{0}_{d_x} < \underline{b} \quad \text{and} \quad  \tilde a_i =\frac{\lambda}{1-\lambda} \underline a <  \mathbf 0_{d_x}<
\frac{\lambda}{1-\lambda} \underline b= \tilde b_i.
\end{equation}
In the second step, we show that the above limits present a feasible set for Problem~2. By definition, for any $i \in \mathbb{N}_n$ at time $t \in \mathbb{N}$:
\begin{equation}\label{eq:proof_2}
  \tilde a_i \leq \Delta y^i_t \leq \tilde b_i  \quad \text{and}\quad  \underline  a \leq  \bar y^\alpha_t \leq  \underline  b.
\end{equation}
Therefore, one arrives at:
\begin{equation}\label{eq:proof_3}
\tilde a_i + \frac{\alpha_i}{\gamma_i} \underline a \leq y^i_t:=\Delta y^i_t + \frac{\alpha_i}{\gamma_i} \bar y^\alpha_t   \leq \tilde b_i+ \frac{\alpha_i}{\gamma_i} \underline b,
\end{equation}
where the left-hand side of~\eqref{eq:proof_3} is lower-bounded by
\begin{equation}
(\frac{\lambda}{1-\lambda})\underline a +\frac{\alpha_i}{\gamma_i}\underline{a} \geq (\frac{\lambda}{1-\lambda}+1)\underline a=\max(\bar \alpha a, \bar a) \geq \bar \alpha a \geq a
\end{equation}
and the right-hand side of~\eqref{eq:proof_3}  is upper-bounded by
\begin{equation}
(\frac{\lambda}{1-\lambda})\underline b +\frac{\alpha_i}{\gamma_i}\underline{b}  \leq(\frac{\lambda}{1-\lambda}+1)\underline{b} = \min(\bar \alpha b, \bar b)\leq \bar \alpha b \leq b.
\end{equation}
As a result, one can conclude that the lower and upper bounds on local states in Problem~2 are satisfied, i.e., 
%\begin{equation}
$a \leq y^i_t   \leq b$.
%\end{equation}
In addition, it is straightforward to show that   the lower and upper bounds on the deep state in Problem~2 is satisfied, where
\begin{align}
&\bar a \leq \max(\bar \alpha a, \bar a) \leq \frac{\lambda}{1-\lambda} \underline{a}+\underline{a}  \leq  \frac{1}{n} \sum_{i=1}^n \alpha_i y^i_t =\\
&\frac{1}{n} \sum_{i=1}^n \alpha_i \Delta y^i_t+ \frac{1}{n} \sum_{i=1}^n \frac{\alpha_i^2}{\gamma_i} \bar y^\alpha_t
 \leq  \frac{\lambda}{1-\lambda} \underline{b}+\underline{b} \leq \min(\bar \alpha b, \bar b) \leq \bar b.
\end{align}
Thus,  solution of  Problems 3-4 is  feasible  for Problem~2.
\end{proof}
\begin{remark}
\emph{Consider a special case when influence factors are a convex combination, i.e.  $\alpha_i \in 
(0,1], \forall i \in \mathbb{N}_n,$ and  $\bar \alpha=1$ such that $\alpha_i \leq \gamma_i$, $\lambda=\frac{1}{2}$, $a=\bar a=-b=-\bar b$ and $c=\bar c=-d=-\bar d$. From Theorem~\ref{thm:2}, one can show that the following bounds provide a feasible solution: $\tilde{a_i}=-\tilde b_i= \underline{a}=-\underline{b}=\frac{1}{2} a$ and $\tilde{c_i}=-\tilde d_i= \underline{c}=-\underline{d}=\frac{1}{2} c, \forall i \in \mathbb{N}_n$.}
\end{remark}
In contrast to Theorem~\ref{thm:2} that only holds for positive factors, we present a new theorem  with more conservative bounds  including both  negative and positive factors. Define $m_x:=\min(b,\bar b)$ if $\min(b,\bar b) + \max(a, \bar a) < \mathbf 0$ and $m_x:=-\max(a, \bar a)$ if $\min(b,\bar b) + \max(a, \bar a) > \mathbf 0$. Similarly,  define $m_u:=\min(d,\bar d)$ if $\min(d,\bar d) + \max(c, \bar c) <\mathbf  0$ and $m_u:=-\max(c, \bar c)$ if $\min(d,\bar d) + \max(c, \bar c) > \mathbf 0$.
\begin{assumption}\label{ass:param2}
Let  $\alpha_i \in [-1,1] $ and $\alpha_i \leq \gamma_i $,  $\forall i \in \mathbb{N}_n$.  Let also $b, \bar b > \mathbf 0_{d_x}$,  $d, \bar d> \mathbf 0_{d_u}$, $a, \bar a < \mathbf 0_{d_u}$ and $d, \bar d < \mathbf 0_{d_u}$.
\end{assumption}
\begin{theorem}\label{thm:3}
Let Assumptions~\ref{ass:convexity_DST} and~\ref{ass:param2} hold. 
 For  any $\lambda \in (0,1)$ and $i \in \mathbb{N}_n$,  suppose that the boundaries of the local and global RHC Problems 3 and 4 are given by:
\begin{align}
&-\tilde a_i :=\tilde b_i:= \lambda m_x, \quad 
-\tilde c_i :=\tilde d_i:= \lambda  m_u,\\
&-\underline a:=\underline b :=(1-\lambda) m_x, \quad
-\underline c :=\underline d:=(1-\lambda) m_u.
\end{align}
Then, at any time $t \in \mathbb{N}_T$, the following linear combination:  
\begin{equation}
u^i_t= \Delta v^i_t + \frac{\alpha_i}{\gamma_i} \bar v^\alpha_t, \quad i \in \mathbb{N}_n,
\end{equation}
is a feasible solution for  Problem~2.
\end{theorem}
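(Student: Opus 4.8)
The plan is to mirror the structure of the proof of Theorem~\ref{thm:2}, but to replace every monotone, direction-preserving inequality by an absolute-value / triangle-inequality estimate, since the scaling factor $\alpha_i/\gamma_i$ is now of indefinite sign and can flip the direction of an inequality. Because the bounds in this theorem are symmetric, I would first recast the feasibility certificates of the RHC Problems~3 and~4 as magnitude bounds. The initial step is to record that, under Assumption~\ref{ass:param2}, both $m_x>0$ and $m_u>0$: indeed $\min(b,\bar b)>0$ and $-\max(a,\bar a)>0$, and the case distinction in the definition of $m_x$ selects exactly $m_x=\min(\min(b,\bar b),\,-\max(a,\bar a))$, the half-width of the largest symmetric box $[-m_x,m_x]$ contained in $[\max(a,\bar a),\min(b,\bar b)]$ (and likewise for $m_u$). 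With $\lambda\in(0,1)$ this makes the intervals $[-\lambda m_x,\lambda m_x]$ and $[-(1-\lambda)m_x,(1-\lambda)m_x]$ non-degenerate, so the feasible sets of Problems~3 and~4 are non-empty, as in the first step of Theorem~\ref{thm:2}.

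The crucial preliminary estimate is $|\alpha_i/\gamma_i|\le 1$ for every $i\in\mathbb{N}_n$, together with its consequence $\mu\le 1$: writing $\alpha_i^2/\gamma_i=|\alpha_i|\cdot|\alpha_i/\gamma_i|\le|\alpha_i|\le 1$ gives $\mu=\frac1n\sum_i \alpha_i^2/\gamma_i\le 1$. This is the single point at which the sign condition of Assumption~\ref{ass:param2} must do real work, and I would verify it with care, since for indefinite $\alpha_i$ it is the magnitude $|\alpha_i|$ relative to $\gamma_i$ that controls the argument, not the signed relation alone. With this in hand, the local state constraint follows by the triangle inequality: the RHC solutions satisfy $|\Delta y^i_t|\le\lambda m_x$ and $|\bar y^\alpha_t|\le(1-\lambda)m_x$ componentwise, so the combination $y^i_t=\Delta y^i_t+\frac{\alpha_i}{\gamma_i}\bar y^\alpha_t$ obeys $|y^i_t|\le\lambda m_x+|\alpha_i/\gamma_i|(1-\lambda)m_x\le m_x$. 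By the definition of $m_x$ this yields $\max(a,\bar a)\le-m_x\le y^i_t\le m_x\le\min(b,\bar b)$, hence $a\le y^i_t\le b$.

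For the deep state I would expand $\frac1n\sum_i\alpha_i y^i_t=\frac1n\sum_i\alpha_i\Delta y^i_t+\mu\,\bar y^\alpha_t$ and bound each summand in magnitude: $\big|\frac1n\sum_i\alpha_i\Delta y^i_t\big|\le\frac1n\sum_i|\alpha_i|\,\lambda m_x\le\lambda m_x$ using $|\alpha_i|\le 1$, and $|\mu\,\bar y^\alpha_t|\le(1-\lambda)m_x$ using $\mu\le 1$; hence $\big|\frac1n\sum_i\alpha_i y^i_t\big|\le m_x$ and therefore $\bar a\le\frac1n\sum_i\alpha_i y^i_t\le\bar b$. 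The action constraints, both the local $c\le u^i_t\le d$ and the deep $\bar c\le\bar u^\alpha_t\le\bar d$, follow verbatim with $m_u$ in place of $m_x$ and $v$ in place of $y$. Note that, unlike Theorem~\ref{thm:2}, a single symmetric magnitude bound $|y^i_t|\le m_x$ simultaneously certifies both the local bounds $(a,b)$ and the deep bounds $(\bar a,\bar b)$, which streamlines the argument.

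I expect the main obstacle to be precisely the loss of the sign-preserving chaining that drove Theorem~\ref{thm:2}: there the positivity of $\alpha_i/\gamma_i$ and the asymmetric $\tfrac{\lambda}{1-\lambda}$ scaling let one propagate the inequalities directionally, whereas here signed coefficients forbid this, so the whole certificate must be rebuilt around the symmetric half-widths $m_x,m_u$ and absolute values. The two delicate points I would check most carefully are that the case distinction defining $m_x$ (and $m_u$) really delivers $[-m_x,m_x]\subseteq[\max(a,\bar a),\min(b,\bar b)]$ in \emph{both} regimes, and that the magnitude estimates $|\alpha_i/\gamma_i|\le 1$ and $\mu\le 1$ are genuinely underwritten by Assumption~\ref{ass:param2} for negative factors.
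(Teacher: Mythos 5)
Your proof is correct and follows essentially the same route as the paper's: the paper likewise first verifies non-emptiness of the symmetric boxes built from $m_x$ and $m_u$, then combines $|\Delta y^i_t|\le\lambda m_x$ and $|\bar y^\alpha_t|\le(1-\lambda)m_x$ via $|\alpha_i/\gamma_i|\le 1$ to obtain $a\le y^i_t\le b$, and bounds the deep state using $|\alpha_i|\le 1$ and $\alpha_i^2/\gamma_i\le 1$ termwise (your $\mu\le 1$). The delicate point you flag is real but shared with the paper: its chain $a\le -m_x\le \tilde a_i+|\alpha_i/\gamma_i|\,\underline a$ also silently uses $|\alpha_i|\le\gamma_i$, which Assumption~\ref{ass:param2} as literally written ($\alpha_i\le\gamma_i$) only guarantees for nonnegative $\alpha_i$, so your argument relies on exactly the same implicit strengthening as the paper's own proof.
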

\begin{proof}
The proof proceeds along the same lines as the proof of Theorem~\ref{thm:2}. In the first step, we show that the above limits construct a non-empty set, i.e.,  from the definition of $m_x$, 
\begin{equation}
a\leq -m_x, \bar a \leq -m_x, m_x \leq b, m_x < \bar b,
\end{equation}
where for any $i \in \mathbb{N}_n$, $ \tilde a_i =- \lambda m_x <  \mathbf 0_{d_x}< \lambda m_x= \tilde b_i$ and $\underline{a}=-(1-\lambda)m_x < \boldsymbol{0}_{d_x} < (1-\lambda)m_x=\underline{b}$.
In the second step, and from~\eqref{eq:proof_2}, we have:
\begin{equation}\label{eq:proof_33}
\tilde a_i + |\frac{\alpha_i}{\gamma_i}| \underline a \leq \Delta y^i_t + \frac{\alpha_i}{\gamma_i} \bar y^\alpha_t   \leq \tilde b_i+ |\frac{\alpha_i}{\gamma_i}| \underline b,
\end{equation}
where, according to Theorem~\ref{thm:3},  $\underline a=-\underline b$. The left-hand side of inequality~\eqref{eq:proof_33} is lower-bounded as follows:
\begin{equation}
a \leq -m_x= -\lambda m_x+ -(1-\lambda) m_x \leq  \tilde a_i + |\frac{\alpha_i}{\gamma_i}| \underline a,
\end{equation}
and its right-hand side  is upper bounded as: 
\begin{equation}
b \geq m_x= \lambda m_x+ (1-\lambda) m_x \geq  \tilde b_i + |\frac{\alpha_i}{\gamma_i}| \underline b.
\end{equation}
Thus, one has $a \leq \Delta y^i_t + \frac{\alpha_i}{\gamma_i} \bar y^\alpha_t   \leq b$. In addition,
\begin{align}\label{eq:proof_22}
& -|\alpha_i| \lambda m_x  \leq \alpha_i \Delta y^i_t \leq |\alpha_i| \lambda m_x,\\
& -|\frac{\alpha_i}{\gamma_i}|(1-\lambda) m_x\leq \frac{\alpha_i}{\gamma_i} \bar y^\alpha_t \leq  |\frac{\alpha_i}{\gamma_i}|(1-\lambda)m_x .
\end{align}
Therefore,
\begin{align}
\bar a &\leq -m_x= -|\alpha_i| \lambda m_x - \frac{\alpha_i}{\gamma_i}(1-\lambda)m_x \\
&\leq \frac{1}{n} \sum_{i=1}^n \alpha_i y^i_t=\frac{1}{n} \sum_{i=1}^n \alpha_i \Delta y^i_t+ \frac{1}{n} \sum_{i=1}^n \frac{\alpha_i^2}{\gamma_i} \bar y^\alpha_t \\
& \leq  |\alpha_i| \lambda m_x+ \frac{\alpha_i}{\gamma_i}(1-\lambda)m_x \leq m_x \leq \bar b.
\end{align}
The rest of the proof is similar to that of Theorem~\ref{thm:2}.
\end{proof}

\begin{remark}
\emph{When the optimal solution in Theorem 1 lies in
the feasible set of the proposed distributed RHC, the RHC
solution for a sufficiently large prediction horizon $H=T$
can be explicitly obtained by Riccati equations~\eqref{eq:Riccati}.
}
\end{remark} 

%\begin{remark}
%\emph{When states and  actions under the proposed  strategy in  Theorem~\ref{thm:1} are all non-negative,  the underlying dynamical model is similar to DNN with ReLU activation function, as displayed  in Figure~\ref{fig:DNN2}. Note that the weights and biases are a function of influence factors (i.e., feature set). 
%}
%\end{remark} 
\subsection{Distributed and decentralized implementations}

The obtained  LQR and RHC solutions can be implemented in a distributed manner, where each agent solves two low-dimensional Riccati equations and  quadratic programmings, respectively,  and compute its action based on local (private) information $\{x^i_t, r^i_t, \alpha_i, \gamma_i\}$ and global (public) information $\{\bar x^\alpha_t, \bar r^\alpha_t, s_t, \mu\}$.

 \begin{table*}[t!]
 \begin{center}
  \caption{Differences between consensus and optimal control}\label{table:dif}
 \scalebox{1}{
 \begin{tabular}{|c|c|c|}
 \hline 
 & Consensus and distributed averaging &  Optimal control \\ 
 \hline 
 Objective & Agents reach identical value after a sufficiently  & Agents allocate resources efficiently during horizon $T$, \\ 
                  & large horizon i.e. $x^i_\infty=x^j_\infty=\bar x^\alpha_\infty, \forall i,j$. &  (not necessarily large $T$), where one may have $x^i_T \neq x^j_T \neq \bar x^\alpha_T, \forall i,j$.   \\ 
 \hline 
 Model & They often have simpler dynamics  & They often have more complicated dynamics  \\ 
 & and time-invariant cost functions. & with hard constraints and time-varying cost functions.\\
 \hline 
Information & Dissemination of information is via  many local &  Dissemination of information is via one-shot cloud-based server\\ 
& interactions (not suitable for costly communications)& (not suitable for hard constrained communication graph)\\ 
 \hline 
 Solution approach & Infinite product of stochastic matrices & Dynamic program and receding horizon control \\ 
  & (often doubly stochastic matrices)  & (often Riccati equation and quadratic programming) \\ 
 \hline 
 \end{tabular} }
   \end{center}
  \end{table*}
\subsubsection{Stochastic model \& certainty equivalence} Suppose that  the dynamics~\eqref{eq:dynamics} have additive  noises such that
%\begin{equation}\label{eq:additive_noise}
$x^i_{t+1}= A_t x^i_t+B_t u^i_t+w^i_t, \forall i \in \mathbb{N}_n$,
%\end{equation}
where $\{w^i_t\}_{1:T}$ is an independent stochastic process. This generalization does not affect the solution of Problem~1 because of the certainty equivalence theorem. In a such case, there is no loss of optimality in replacing the noises with their  expectations. For Problem 2, however, certainty equivalence theorem does not hold. Nonetheless, one can  use  the certainty equivalence approximation (where the noise is replaced by  its expectation) to convert the stochastic  dynamics to deterministic ones and establish recursive feasibility~\cite{bernardini2011stabilizing,rawlings2017model,mesbah2019stochastic}. When it comes to   distributed implementation of the stochastic model,  each agent at every time $t$ only requires to observe  the deep state $\bar x^\alpha_t \in \mathbb{R}^{d_x}$  (whose size is independent of the number of agents unlike the centralized joint state $(x^1_t,\ldots,x^n_t)$).

\subsubsection{Two-time-scale distributed (consensus-based) solution} Suppose that the agents' communication graph  does not allow the immediate observation of the deep state. In this case, one can use a two-time scale distributed optimization strategy. At each time instant $t$, agents  run a consensus algorithm to  compute the deep state after a sufficiently large number of iterations. Such a two-time-scale distributed implementation is practical in many control applications, especially where the communication (information) process is significantly  faster than the physical (control) process.
 
\subsubsection{Fully decentralized information in asymptotic model}
It is not always  feasible to share the deep state among agents specially when the number of agents is very large.
% In this case, one can predict the deep state using the strong law of large numbers. More precisely, suppose
%that the matrices in dynamics~\eqref{eq:dynamics}, cost function~\eqref{eq:per_step_cost} are independent of $n$, and local noises are i.i.d. processes. Subsequently, the proposed LQR and RHC solutions are independent of $n$. 
In such a case,  the deep state  can be predicted (rather than communicated)  by the infinite-population approximation because the dynamics of the  infinite-population model is  deterministic due to the strong law of large numbers. See for example~\cite{Jalal2021CDC_KF,JalalCDC2018},  where the predicted case (sub-optimal solution) converges to the communicated case (optimal solution) at the rate~$1/n$. This leads to a fully decentralized control structure, where each agent needs to observe  only its local  information. 

\section{Multiple sub-populations and  features}\label{sec:multiple}

So far, we have assumed that the matrices in  dynamics~\eqref{eq:dynamics} and cost function~\eqref{eq:per_step_cost} are identical for all agents (i.e., one population) and the agents are coupled through one set of factors (i.e., one feature). In this section, we briefly discuss the generalization of our main results  to cases with multiple sub-populations and multiple features.
\subsection{Multiple sub-populations}

Consider a population consisting of $S \in \mathbb{N}$ sub-populations, where  matrices in dynamics and cost functions of agents in  each sub-population are identical.  In such a case, the unconstrained optimization problem gets decomposed into $S+1$ smaller LQR problems~\cite{Jalal2019risk}. Similarly, one can propose $S+1$ distributed RHC problems. See~\cite{JalalCCECE2018,JalalCDC2018} for an example with two sub-populations where one sub-population contains one leader and another sub-population  a large number of followers.

\subsection{Multiple features}
Consider a population where agents are coupled through $f \in \mathbb{N}$ sets of influence factors (features). For example, any directed weighted graph can be decomposed by singular value decomposition and any undirected weighted graph by spectral decomposition, respectively, where features represent the singular vectors and eigenvectors. In such a case, every feature may be viewed as a \emph{virtual} sub-population; hence,  the unconstrained optimization problem decomposes into $f+1$ LQR problems~\cite{Jalal2019risk}. Analogously, one can construct $f+1$ distributed RHC problems. In what follows, we present two problems with more than one set of features.

\subsubsection{Cohesive cost function}\label{sec:cohesive}
It is possible to add a cost function to~\eqref{eq:per_step_cost} in order to incorporate the cohesiveness of the swarm where the team cost becomes $\beta \bar c_t  + (1-\beta)  c^L_t$, $\beta \in [0,1]$ and  the cohesive cost  $ c^L_t$ is a quadratic function of the relative distances, i.e.
 \begin{equation}\label{eq:general_cost_cohesive}
  c^L_t:=  \mathbf x_t^\intercal L \mathbf x_t= \frac{1}{2}\sum_{i=1}^n \sum_{k=1}^n (x^i_t-x^k_t)^\intercal L^{i,k} (x^i_t -x^k_t),
 \end{equation}
 where $L$ is not  necessarily a symmetric matrix. However, a special case of the cohesive cost function is Laplacian, i.e.
 \begin{align}
c^L_t= \mathbf x_t^\intercal L \mathbf x_t&=: \frac{1}{2}\sum_{i=1}^n \sum_{k=1}^n  A(i,k)(x^i_t-x^k_t)^\intercal (x^i_t -x^k_t)\\
&=\sum_{(i,k) 
\in \mathcal{E}} (x^i_t-x^k_t)^\intercal (x^i_t -x^k_t), 
 \end{align}
where $A$ and  $L$ are   the adjacency and  Laplacian (symmetric) matrices of  an undirected weighted graph and $\mathcal{E}$ is  the  edge set. 
The same analogy holds for a more general cost function
\begin{equation}
c^L_t=:\mathbf x_t^\intercal \DIAG(\alpha_1,\ldots,\alpha_n)^{-1}L \mathbf x_t,
\end{equation}
resulting in  consensus to a weighted average~\cite{Olfati2007survey,tsitsiklis1986distributed,bertsekas1989parallel,Jadbabaie2003}.   In addition, one can use  decomposition methods such as
\begin{equation}
\hspace{0cm} \mathbf x_t^\intercal L \mathbf x_t=:\sum_{j=1}^n \sigma_j  \mathbf x_t^\intercal U_j V_j^\intercal \mathbf x_t, \hspace{.1cm} \text{(singular value decomposition)}
\end{equation} 
 \begin{equation}
\hspace{0cm} \mathbf x_t^\intercal L \mathbf x_t=:\sum_{j=1}^n \lambda_j^2 \mathbf x_t^\intercal V_j V_j^\intercal \mathbf x_t,\tag{spectral decomposition}
 \end{equation}
 to decompose~\eqref{eq:general_cost_cohesive} and restrict attention to a few dominant  features associated with the largest singular values and eigenvalues, respectively.  Although consensus and optimal control  are two different problems (see~Table~\ref{table:dif} for a few differences),  they are related in some sense. In particular,  the consensus problem may be formulated as a linear time-invariant system with integrators and an infinite-horizon time-average quadratic cost function.   In such a case, the consensus strategy makes all the relative distances  as well as any tracking distance from  the consensus value  go to zero~\cite{Cao2010}. 
On the other hand, the optimal control strategy  may be viewed as a solution to the problem of finding the best  topology  for the communication graph with  quadratic similarity index; see~\cite[Corollary 1]{JalalCCECE2018}, for example.

\subsubsection{Soft structural constraint}
 It is possible to add a soft-constraint term  to~\eqref{eq:per_step_cost}  in order  to take into  account the structure of the control strategy, where the hard constraint $\mathbf u_t= H \mathbf x_t$ is replaced by the quadratic soft constraint $(\mathbf u_t -H \mathbf x_t)^\intercal (\mathbf u_t -H \mathbf x_t)$.  Analogously, one can consider the structure of dynamics, where  $\mathbf x_{t+1}= S \mathbf x_t$ is replaced by  $(\mathbf x_{t+1} -S \mathbf x_t)^\intercal (\mathbf x_{t+1} -S \mathbf x_t)$. Therefore,  one can use the singular value decomposition and spectral theorem to  generate dominant features associated with the above quadratic  cost functions, similar to those in the cohesive cost function.

\section{Cyber-Physical attacks}\label{sec:cyber}
 In this section, we propose a new  class of  cyber-physical attacks  formulated as perturbed influence factors. Let $z_i \in \mathbb{R}$ be the status of the attack associated  with agent $i  \in \mathbb{N}_n$.  Let $\tilde \alpha_i:=\alpha_i(z_i)$  denote  the  \emph{attack factor}, which is a function of~$z_i$.   Depending on  the  attack function, we can   define different types of attacks. Below, we mention a few cases.
\begin{itemize}
\item \textbf{Denial of service}. This is when  $z_i=0$, if agent $i$  is  attacked, and  $z_i=1$, if  not attacked, where $\tilde \alpha_i:=\alpha_i z_i$.  
\item \textbf{Leader attack}. This is when one (leading) agent is targeted, i.e., the one with the largest influence factor. 
\end{itemize} 
 We can also  define various defence mechanisms as follows.
\begin{itemize}
\item \textbf{Isolated mechanism}. In this case,  agent $i$ is dispensable; hence,  it gets isolated by choosing a relatively small  value (i.e., close to zero) for factor $\tilde \alpha_i$.  Subsequently,  agent~$i$ has a negligible effect in the center of the swarm and will be ignored  by  the swarm. 
\item \textbf{Protected mechanism}. In this scenario,  agent $i$ is important; hence,  it gets protected  by other agents via making its factor $\tilde \alpha_i$ considerably larger. As a result, agent~$i$ would have a significant effect in the center of the swarm.  In particular, the larger $\tilde \alpha_i$, the closer the center of the swarm is to agent $i$. This mechanism is helpful for situations in which  other agents must cover the attacked agent by moving  to its vicinity.
\end{itemize} 
\begin{remark}
\emph{In practice, one can extend the above setup to  time-varying attacks by developing  a two-time-scale framework, where  attacks  are occurred  in the slower scale and the RHC (or MPC) is deployed in the faster scale.}
\end{remark}

\section{Simulations}\label{sec:numerical}

 \begin{figure}[t!]
\hspace{-.6cm}
\vspace{-1cm}
\scalebox{.55}{
\includegraphics[trim=0cm 0cm 0cm 1cm, clip,width=\textwidth]{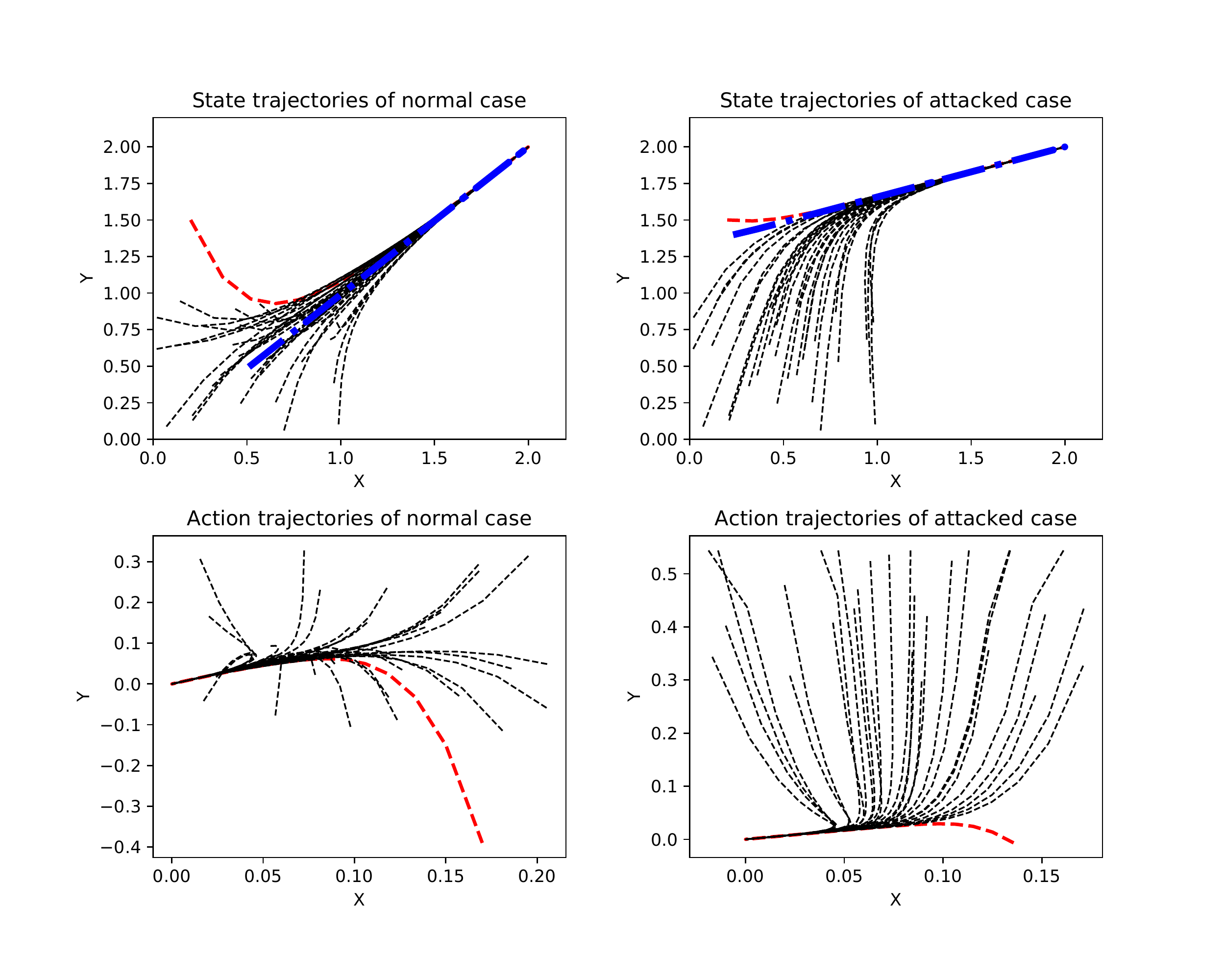}}
\vspace{-.3cm}
\caption{Unconstrained case in Example 1.}\label{fig:unconstrained}
\vspace{-.4cm}
\end{figure}
\begin{figure}[t!]
\hspace{-.6cm}
\scalebox{.55}{
\includegraphics[trim=0cm 0cm 0cm 0cm, clip,width=\textwidth]{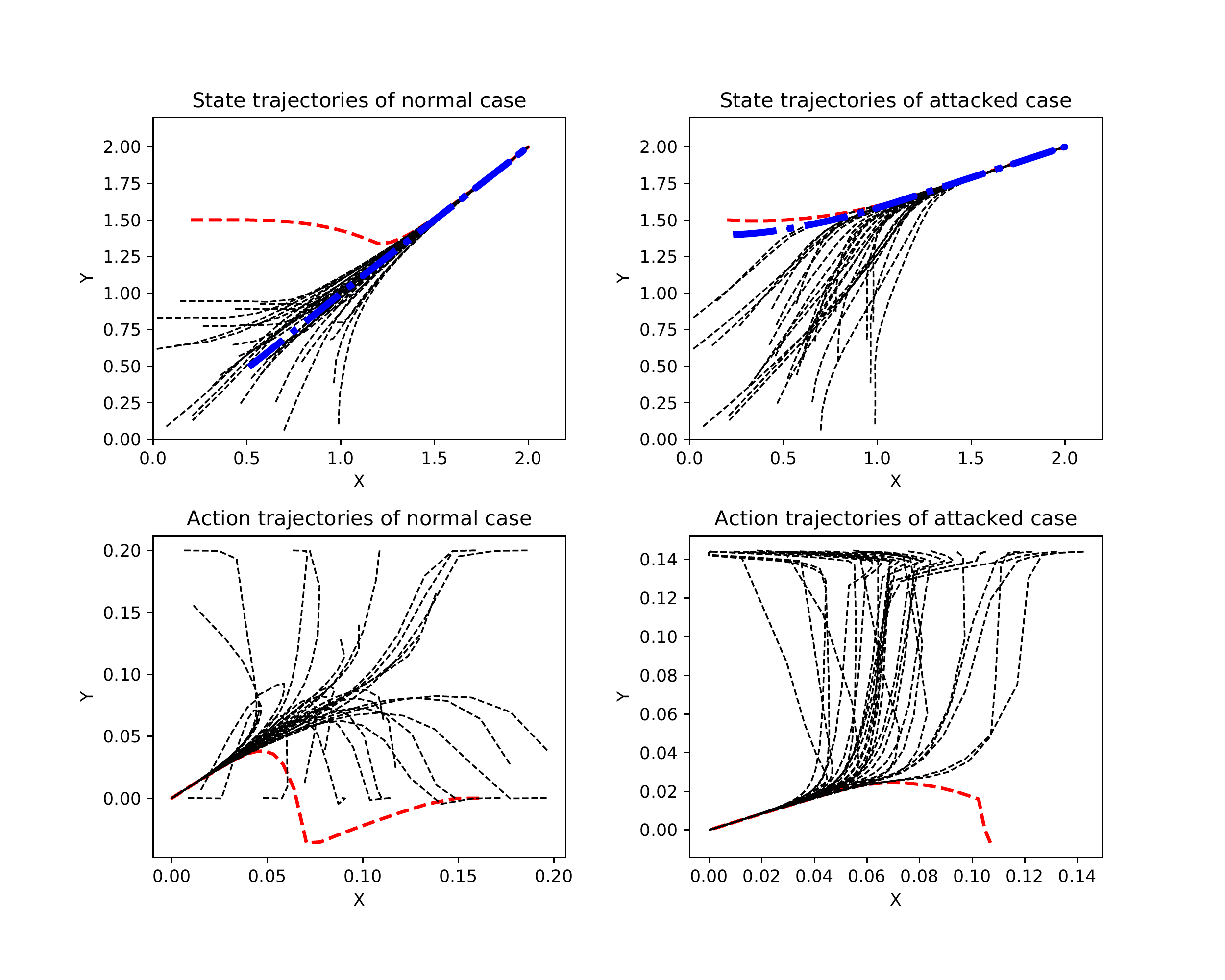}}
\vspace{-1cm}
\caption{Constrained case in Example 1,  where   $|u^i_t|<0.2$, $i \in \mathbb{N}_n$. }\label{fig:constrained}
\vspace{-.2cm}
\end{figure}
\textbf{Example 1}. Consider a group of  robots that are interested to move towards a target collectively.  Let the influence factors $\alpha_i \geq 0$, $ i \in \mathbb{N}_n$,   construct a center of mass  i.e.  $\bar \alpha=1$; hence, tracking is strong according to Definition~\ref{def:strong}. In our simulations,  control horizon is  $T=100$ and number of robots is $n=100$.   Let  the dynamics of the robots be linearised such that  $A=B=\DIAG(1,1)$, and their team cost function be defined as follows:
\begin{equation}
\frac{1}{n} \sum_{t=1}^T  \sum_{i=1}^n  \alpha_i(\|x^i_t - \bar x^\alpha_t \|_{Q} + \|u^i_t\|_{R}) + \|\bar x^\alpha_t -s\|_{\bar Q},
\end{equation}
where $Q=\DIAG(5,50)$, $\bar Q=\DIAG(1,1)$ and  $R=\DIAG(100,100)$.  In addition, we consider a case  in which one robot  is physically attacked  and other $(n-1)$ robots follow a protected mechanism to cover it, as described in Section~\ref{sec:cyber}. 
Let $z_i = 1$ denote that  agent $i$ is attacked and  $z_i = 0$ denote that  it  is not, $i \in \mathbb{N}_n$.   In this case, the perturbed  influence factor of  robot $i \in \mathbb{N}_n$ can be  defined as:
%\begin{equation}
%\tilde \alpha_i:= \begin{cases}
%n \rho, & \text{robot $i$ is attacked},\\
%\frac{n(1-\rho)}{n-1}, & \text{otherwise},
%\end{cases}
%\end{equation}
\begin{equation}
\tilde \alpha_i:= n \rho z_i  +\frac{n}{n-1} (1-\rho) (1- z_i),
\end{equation}
where $\frac{1}{n}\sum_{i=1}^n \tilde{\alpha}_i=1$ and  $\rho \in [0,1]$   determines  the level of protection. The larger~$\rho$,   the closer  the center of mass is  to the attacked (targeted) robot, providing more protection.

  The results of  our simulations are depicted  in Figures~\ref{fig:unconstrained} and~\ref{fig:constrained},   where we display only  $30$ out of $100$ robots to ease the exposition. In these figures, the blue dotted line is the trajectory of the center of mass and the red dashed line is that of the attacked robot.   In particular,  it is shown in  Figure~\ref{fig:unconstrained} that the robots can collectively  reach the target $s=(2,2)$  in the normal case (where influence factors are homogeneous $\alpha_i=1$) as well as  the attacked case (where the perturbed  influence factors are calculated for  $\rho=0.9$). Furthermore,  we consider a similar setting wherein control signals are bounded such that $|u^i_t|<0.2$, $\forall i$. To solve the resultant problem, we use  quadratic programming to find a solution for the proposed  local and global RHCs, where $H=10$ and $\lambda=0.5$.
It is demonstrated in Figure~\ref{fig:constrained} that  the robots can  collaboratively  reach the target  while respecting their control constraints.

\section{Conclusions and future  directions}\label{sec:conclusions}
We introduced deep structured tracking for a large number of decision-makers, where the interaction between them is modelled by influence factors. The  influence factors can   represent  physical features and constraints  (e.g.,  the center of swarm) as well as non-physical ones  (e.g., adhesive behaviour of the swarm). 
 For the unconstrained and constrained cases, two low-dimensional  solutions were proposed.    On the one hand, the unconstrained solution  was   shown to be optimal,  obtained by solving two scale-free Riccati equations,  where  its extension to the infinite-horizon cost function is straightforward. On the other hand,  the constrained solution  took affine constraints into account, where establishing  its stability is  difficult  due to the time-varying nature of the solution.  In addition, the main results were generalized  to multiple sub-populations and multiple features.

There are several possible future directions. For example, one can consider (a)  different forms of dynamics (e.g. aerial and ground  vehicles) and cyber-physical attacks (e.g. time-varying attacks with two-time-scale framework, denial of service,  or minmax optimization with adversarial player); (b)   output feedback,   H$_2$  and H$_\infty$  control algorithms; (c)  a more general model  with  non-symmetric weighting matrices as well as non-quadratic and non-convex cost  functions using  interior-point methods; (d)  constrained reinforcement learning and data-driven approaches, and (e) the investigation of  the optimal feasible set (e.g., best $\lambda$ in Theorems~\ref{thm:2} and~\ref{thm:3}).

\bibliographystyle{IEEEtran}
\bibliography{Jalal_Ref}
% \printbibliography
\end{document}